\documentclass[twocolumn,prl,aps,superscriptaddress,10pt,longbibliography]{revtex4-2}
\usepackage{mathtools}
\setcounter{secnumdepth}{3}
\usepackage{amsbsy}
\usepackage{amstext}
\usepackage[unicode=true,pdfusetitle,
 bookmarks=false,
 breaklinks=false,pdfborder={0 0 1},backref=false,colorlinks=false]
 {hyperref}
\usepackage[dvipsnames]{xcolor}

\hypersetup{colorlinks=true,citecolor=RoyalPurple,filecolor=black,linkcolor=RoyalPurple,urlcolor=RoyalPurple}

\makeatletter


\@ifundefined{textcolor}{}
{%
 \definecolor{BLACK}{gray}{0}
 \definecolor{WHITE}{gray}{1}
 \definecolor{RED}{rgb}{1,0,0}
 \definecolor{GREEN}{rgb}{0,1,0}
 \definecolor{BLUE}{rgb}{0,0,1}
 \definecolor{CYAN}{cmyk}{1,0,0,0}
 \definecolor{MAGENTA}{cmyk}{0,1,0,0}
 \definecolor{YELLOW}{cmyk}{0,0,1,0}
}
\newcommand\ket[1]{\left|#1\right\rangle}




\usepackage{txfonts, color}
\usepackage{dsfont}

\usepackage{graphicx}
\usepackage{float}

\usepackage{amsmath, amssymb, amsfonts, amsthm, bbm}
\usepackage{physics}
\usepackage[normalem]{ulem}

\usepackage{centernot}
\newcommand{\notimplies}{\centernot\implies}

\newtheorem{theorem}{Theorem}

\newtheorem{corollary}{Corollary}[theorem]
\newtheorem{observation}{Observation}
\makeatother

\begin{document}

\title{Marginal second order moments do not suffice for entanglement detection}
\author{A. Garcia-Velo}
\affiliation{IMDEA Networks, Avda. Mar Mediterr{\'a}neo, 22, 28918 Legan{\'e}s, Spain}
\author{M. Paraschiv}
\affiliation{IMDEA Networks, Avda. Mar Mediterr{\'a}neo, 22, 28918 Legan{\'e}s, Spain}
\author{Y. Ban}
\affiliation{Instituto de Ciencia de Materiales de Madrid (CSIC), Cantoblanco, 28049 Madrid, Spain}
\author{E. Torrontegui}
\email{eriktorrontegui@gmail.com}
\affiliation{Departamento de F{\'i}sica, Universidad Carlos III de Madrid, Avda. de la Universidad 30, 28911 Legan{\'e}s, Spain}
\author{R. Puebla}
\affiliation{Departamento de F{\'i}sica, Universidad Carlos III de Madrid, Avda. de la Universidad 30, 28911 Legan{\'e}s, Spain}

\begin{abstract}
The complete knowledge of the global and marginal second-order moments of a quantum state is in general insufficient for entanglement detection. By deriving the conditions on local-unitary (LU) transformations through the second-order moments, we construct pairs of separable and entangled states that are not LU equivalent, contain different amount of entanglement, and are not  equivalent under stochastic local operations and classical communication, even though the states share identical global and marginal second-order moments. 
\end{abstract}

\maketitle

{\em Introduction.--} Entanglement is arguably the most prominent hallmark of quantum mechanics~\cite{Horodecki_2009}, with profound implications in our understanding of physical reality~\cite{Einstein_35, Bell_64, Brunner_2014}. Beyond its foundational significance, entanglement serves as a key resource in the development of quantum technologies~\cite{Dowling_2003}. It plays a central role in a wide range of applications, from quantum information processing and computation~\cite{Nielsen_2010} to quantum metrology~\cite{Paris_09, Giovannetti_2011}.

The quantification and detection of entanglement is therefore of paramount importance~\cite{Friis_2019}, not only from a fundamental point of view, but also for the correct calibration of physical systems and the determination of quantum correlations beyond a classical setting. Traditional entanglement characterization protocols, however, face two main problems. On the one hand, if no prior knowledge of the studied state is known, one has to resort to methods based on standard quantum state tomography. These demand a number of measurements that grows exponentially with the number of particles~\cite{Blume-Kohout_2010, Gross_2010}, although prior knowledge of the state may improve its efficiency~\cite{Somma_2006, Guhne_2007, Guhne_2009, Cramer_2010, Bowles_2020}. On the other hand, such protocols need a careful alignment between local reference frames, which may be either difficult or impossible to implement~\cite{Bartlett_2007}.

Over the past decades, novel approaches have been developed to address the alignment problem~\cite{Walter_2013, Aguilar_2015, Aolita_2007, D'Ambrosio_2012}. Of special interest are reference-frame independent entanglement criteria, which leverage the invariance of quantum correlations under local reference frame transformations~\cite{Aschauer_2004, Vicente_2007, Vicente_2008, Badziag_2008, Hassan_2008, Vicente_2011, Laskowski_2011, Gabriel_2013}. Such methods have proven valuable in testing Bell-type inequalities~\cite{Liang_2010, Wallman_2011, Shadbolt_2012, Wallman_2012, Shadbolt_2012_2, Palsson_2012, Senel_2015, Rosier_2017, Fonseca_2018}, quantum key distribution~\cite{Laing_2010, Wabnig_2013, Slater_2014} and Schmidt decomposition estimation \cite{Laskowski_2012, Laskowski_2013}. Although they do not require local reference frame alignment, some control over the choice of local reference frames is typically still necessary.


In this regard, methods based on random measurements for entanglement detection hold great potential. They do not requiring control or knowledge over local reference frame alignment, even under limited measurement control and with a reduced number of measurements~\cite{Knips_2020, Wyderka_2023, Cieslinski_2024}. Indeed, random sequences and/or measurements have emerged as a valuable method in other scenarios, e.g. for probing novel quantum many-body physics~\cite{Skinner_2019, Noel_2022, Koh_2023, Morvan_2024} and states~\cite{Elben_2019, Elben_2020, Elben_2020_2, Notarnicola_2023}, for quantum device verification and characterization~\cite{Knill_2008, Elben_2020_3, Helsen_2022}, or for the estimation of relevant functions of the state~\cite{Elben_2018, Brydges_2019, Huang_2020}.

In particular, entanglement criteria have been proposed in terms of the statistical moments of the probability distributions obtained after random measurements~\cite{Knips_2020_2, Cieslinski_2024}. These methods use either global~\cite{Tran_2015, Tran_2016} or marginal~\cite{Knips_2020, Shravan_2024} 2nd-order moments, or combines global 2nd- and 4th-order moments~\cite{Ketterer_2020, Ketterer_2018}, which are experimentally accessible quantities in practical quantum platforms. However, it remains as an open question to what extent these moment-based methods can help in detecting entanglement in a fully general setting, that is, for mixed states in systems of arbitrary dimension.

In this Letter, we present results that constitute a significant step in this direction. We investigate whether marginal 2nd-order moments are sufficient to characterize local-unitary (LU) equivalent quantum states. Specifically, we ask: Given two states $\varrho$ and $\tilde{\varrho}$ of the same physical system, does the equality of their global and marginal 2nd-order moments imply that the states $\varrho$ and $\tilde{\varrho}$ are related by a LU transformation? We demonstrate that the answer is negative in general, with the equivalence holding only for pure states of two qubits. Thereby showing that merely information about marginal and global 2nd-order moments do not suffice, in general, for entanglement detection, let alone classification or quantification. Notably, we provide examples of separable and entangled states with identical 2nd-order moments, which not only proves the question under study, but also that two states with equal global and marginal 2nd-order moments may belong to different stochastic local operations and classical communication (SLOCC) classes.

To prove this result, we first introduce the notation for state representation, random measurements and 2nd-order moments. We then derive the conditions that two LU-equivalent states need to fulfill in terms of their Bloch vectors, and how they manifests in their global and marginal 2nd-order moments. Based on this analysis, we construct a counterexample where a separable and an entangle state share identical 2nd-order moments. Note, however, that in the specific case of two-qubit pure states identical marginal moments is a necessary and sufficient condition for LU-invariance~\cite{Tran_2015, Tran_2016}.

{\em State representation.--} Let $\mathcal{H}$ be the Hilbert space of a system with dimension $\dim \mathcal{H} = d$. Furthermore, let $\mathbb{H}(\mathcal{H})$ be the associated real Hilbert space of all bounded Hermitian operators acting on $\mathcal{H}$ with respect to the Hilbert-Schmidt inner product. This defines the set of all possible quantum observables, and has $\dim \mathbb{H}(\mathcal{H}) = d^2 - 1$. Quantum states form the subset $\mathcal{D(H)} = \qty{\varrho \in \mathbb{H}(\mathcal{H}) | \tr \varrho = 1, \varrho \geq 1}$. We say the state is pure if $\varrho \in \mathcal{D(H)}$ is rank-1, such that $\varrho = \dyad{\psi}$. Any function of the state $\varrho \in \mathcal{D(H)}$, may be decomposed in terms of measurements of certain observables $A \in \mathbb{H}(\mathcal{H})$, and estimation of its expected values $\expval{A, \varrho} = \tr A \varrho$. \cite{Wyderka_2023}.

The Gell-Mann operators, $\qty{\lambda_i}_{i = 0}^{d^2- 1}$, whose elements satisfy $\langle\lambda_i, \lambda_j\rangle = d \delta_{ij}$ and $\tr \lambda_i = d \delta_{i0}$, constitute a suitable basis for $\mathbb{H}(\mathcal{H})$. Thus, they allow to write any $\varrho \in \mathcal{D(H)}$ in terms of the expectation values $r_i = \expval{\varrho, \lambda_i}$. If $\varrho \in \mathcal{D(H)}$, then $\varrho = \frac{1}{d} \qty(\mathbbm{1} + \sum_{i = 1}^{d^2 - 1} r_i \lambda_i)$, with $r_i \in \mathbb{R}$, $\forall i \in \qty{1, \ldots, d^2 - 1}$. Defining the vector $\vec{r} = \qty(r_i \big| i \in \qty{1, \ldots, d^2 - 1} )$, then $\norm{\vec{r}}^2 \leq d - 1$, with $\norm{\vec{r}}^2 = d - 1$ if and only if $\varrho$ is pure.

Let our system be composed of $N$ subsystems, and denote $\hat{N} \equiv \qty{1, 2, \ldots, N}$ the set of the first $N$ integers. Each subsystem has an associated Hilbert space $\mathcal{H}^{(n)}$ with respective dimensions $\dim \mathcal{H}^{(n)} = d^{(n)}$ such that $\mathcal{H} = \bigotimes_{n \in \hat{N}} \mathcal{H}^{(n)}$.

Explicitly acknowledging the tensor product structure, the state becomes $\varrho = \frac{1}{d} \sum_{i^{(1)} = 0}^{d^{(1)2} - 1} \cdots \sum_{i^{(N)} = 0}^{d^{(N)2} - 1} r_{i^{(1)}, \ldots, i^{(N)}} \bigotimes_{n \in \hat{N}} \lambda_{i^{(n)}} \equiv \frac{1}{d} \sum_{i^{(\hat{N})} = 0}^{d^{(n)2} - 1} r_{i^{(\hat{N})}} \lambda_{i^{(\hat{N})}}$. First, note that $i^{(\hat{N})}$ denotes the tuple of indices $(i^{(1)}, \ldots, i^{(N)}) = (i^{(n)} | n \in \hat{N})$, where every $i^{(n)}$ corresponds to the indexing of the local Hilbert space $\mathcal{H}^{(n)}$. Second, $\lambda_{i^{(\hat{N})}}$ is just a notation for $\bigotimes_{n \in \hat{N}} \lambda_{i^{(n)}}$. As a general rule, we will use a set as upper index to denote a tuple of indices. Note the difference between $i^{(N)}$ and $i^{(\hat{N})}$. Finally, $r_{0, \ldots, 0} = 1$. 



In this manner, one can explicitly distinguish the elements of $\vec{r}$ that define each subsystem and write
\begin{equation}
    \varrho = \frac{1}{d} \qty( \mathbbm{1}_{\mathcal{H}} + \sum_{\hat{M} \subseteq \hat{N}} \sum_{i^{(\hat{M})} = 1}^{d^{(m)2} - 1} r_{i^{(\hat{M})}} \lambda_{i^{(\hat{M})}}) \, ,
\end{equation}
where $\sum_{\hat{M} \subseteq \hat{N}}$ denotes the sum over all possible subsets of $\hat{N}$. Following the previous notation, let $i^{(\hat{M})}$ be the tuple of indices that contain the ones corresponding to the local Hilbert spaces included in partition $\hat{M}$, that is $i^{(\hat{M})} \equiv (i^{(m)} | m \in \hat{M})$. Equivalently, the same notation may be used to denote the tuple of indices $i^{(\hat{N})}$ with zeros in the positions not contained in $\hat{M}$.

{\em Random measurements.--} The random measurements approach assumes independent measurements in each individual subsystem, i.e. $A = \bigotimes_{n \in \hat{N}} A^{(n)}$; with $A^{(n)} \in \mathbb{H}\qty[\mathcal{H}^{(n)}]$, $\forall n \in \hat{N}$. Each observable $A^{(n)}$ is randomly sampled with respect to the uniform measure in $\mathbb{H}\qty[\mathcal{H}^{(n)}]$. This is equivalent to initially sampling elements from the special unitary group, $U^{(n)} \in \mathcal{SU}\qty[d^{(n)}]$ with respect to the Haar measure, $\forall n \in \hat{N}$. Then applying an LU transformation, $\varrho \to U^\dagger \varrho U$, where $U = \bigotimes_{n \in \hat{N}} U^{(n)}$. And finally measuring with a fixed product observable. This results in a probability distribution for $\langle \varrho, U A U^\dagger\rangle$, whose behavior depends on the correlations present in the state~\cite{Knips_2020, Knips_2020_2}. This method can be applied even under strong local unitary noise, assuming both high-rate state generation (compared to the time scales of the unitary noise in the quantum channel) and uniform sampling of local observables~\cite{Knips_2020}.



One can define the $t$-th order moments of the probability distribution of the random correlation function for the local observable $A$ as
\begin{equation} \label{Equation: t-th order moments}
    \mathcal{R}^{(t)}_A (\varrho) \equiv C \int_{\mathcal{SU}\qty[d^{(1)}]} \dd U^{(1)} \cdots \int_{\mathcal{SU}\qty[d^{(N)}]} \dd U^{(N)} \expval{\varrho, U A U^\dagger}^t \, ,
\end{equation}
where $C$ is a normalization constant, and $\int_{\mathcal{SU}\qty[d^{(n)}]} \dd U^{(n)}$ denotes the Haar unitary integral over $\mathcal{SU}[d^{(n)}]$~\cite{Ketterer_2018, Ketterer_2020, Knips_2020_2, Cieslinski_2024}. In this particular problem, knowledge of all the moments allows to uniquely determine the distribution \cite{Knips_2020_2}. Note that all the moments for odd $t$ vanish due to the symmetry of the probability distribution.

Unitary $t$-designs can be used to simplify Eq.~\eqref{Equation: t-th order moments}. They are a set of unitary operators $\qty{U_k \in \mathcal{SU}(d)}_{k = 1}^{K^{(t)}}$ such that $\frac{1}{K^{(t)}} \sum_{k = 1}^{K^{(t)}} P_t(U_k) = \int_{S_u[d]} P_t(U) \dd U$, for all homogeneous polynomials $P_t$ on $\mathcal{SU}(d)$ of degree smaller or equal than $t$~\cite{Ketterer_2018, Ketterer_2020}. In particular, $2$nd- and $4$th-order moments have been obtained relying on unitary $3$-designs~\cite{Webb_2016, Zhu_2016} and unitary $5$-designs~\cite{Gross_2007}, respectively~\cite{Ketterer_2018, Ketterer_2020}. In the particular case of local dimension $2$ for every subsystem (qubits), the two-to-one homomorphism between $\mathcal{SU}(2)$ and $\mathcal{SO}(3)$ allows to simplify the Haar integration over the hole unitary group as an integral over the uniform measure in the $3$-dimensional sphere. Quantum designs also allow for the computation of $t$-th order moments with a reduced number of fixed measurements, instead of random ones. This, however, requires control on local measurement directions but still without alignment between subsystems.


{\em Global second order moments.--} The 2nd-order moment has been extensively studied for entanglement detection~\cite{Cieslinski_2024}. It can be written as a function of the elements in $\vec{r}$, a simple relation that cannot be generalized to higher order moments~\cite{Cieslinski_2024, Tran_2015, Tran_2016}:
\begin{equation} \label{Equation: 2nd-order moment}
    \mathcal{R}^{(2)} (\varrho) = \sum_{i^{(\hat{N})} = 1}^{d^{(n)2} - 1} r_{i^{(\hat{N})}}^2 = \norm{\vec{r}^{(\hat{N})}}^2 \, ,
\end{equation}
upon a suitable choice of $C$ in Eq.~\eqref{Equation: t-th order moments}. For pure states of any local dimension $d^{(n)}$ and number of subsystems $N$, lower bounds have been derived, which saturates if and only if the state is separable. Thus $\mathcal{R}^{(2)}$ alone is enough to detect entanglement~\cite{Tran_2015, Tran_2016}. For mixed states, however, the bound is no longer satisfied and $\mathcal{R}^{(2)}$ has not such a direct implication in entanglement detection~\cite{Tran_2016}. Nevertheless, a necessary and sufficient condition has been found for rank-2 states, by convex roof construction~\cite{Tran_2016}. In addition, necessary criteria in terms of $\mathcal{R}^{(2)}$ can be formulated for the classification into stochastic local operations and classical communication (SLOCC) entanglement classes~\cite{Ketterer_2020}.

We denote that $\varrho, \tilde{\varrho} \in \mathcal{D(H)}$ are LU-equivalent as $\varrho \sim \tilde{\varrho}$. If $\varrho \sim \tilde{\varrho}$, then $\mathcal{R}^{(t)} (\varrho) = \mathcal{R}^{(t)} (\tilde{\varrho})$. This directly follows from the definition of the Haar measure in Eq.~\eqref{Equation: t-th order moments}~\cite{Collins_2006, Puchala_2017, Mele_2024}. Thus, $\mathcal{R}^{(t)} (\varrho)$ is invariant under LU transformations. Nonetheless, there exist states $\varrho, \tilde{\varrho} \in \mathcal{D(H)}$ such that $\mathcal{R}^{(2)} (\varrho) = \mathcal{R}^{(2)} (\tilde{\varrho})$, and yet $\varrho \nsim \tilde{\varrho}$. An example proving this statement is given in the End Matter (cf.~\ref{Proof: Same moment different classes}). This means that $\mathcal{R}^{(2)}$ does not distinguish between LU-entanglement classes. In general, SLOCC classes are not distinguished either, even for two qubits. 


The global 2nd-order moment provides valuable information, but alone is not sufficient to unequivocally detect entanglement. Two distinct paths are taken to improve this moments-based method. On the one hand, considering also higher-order moments to gain further insight into the global correlation distribution. Indeed, the use of $\mathcal{R}^{(2)}$ and $\mathcal{R}^{(4)}$ for qubits has proven useful, allowing for several necessary or sufficient conditions in entanglement detection~\cite{Ketterer_2018} or for SLOCC classification~\cite{Ketterer_2020}. Yet, its extension to arbitrary dimensions and/or to even higher-order moments is challenging. On the other hand, one may access the 2nd-order moments of the global and all possible reduced states, the so-called marginals 2nd-order moments. These marginals aid in identifying correlations between different subsets of particles, without compromising its generality and computational complexity. This latter approach is the focus of our work, and as we show in the following it fails to be sufficient for entanglement detection.

{\em Marginal second order moments.--} Let $\varrho \in \mathcal{D(H)}$ and $\hat{M} \subseteq \hat{N}$. The reduced state of $\varrho$ restricted to subsystem $\hat{M}$ is given by $\tr_{\hat{N} \setminus \hat{M}} \varrho \equiv \varrho^{(\hat{M})}$. Its description requires the information of all subsets $\hat{M}' \subseteq \hat{M}$:
\begin{equation}
    \varrho^{(\hat{M})} = \frac{1}{d^{(\hat{M})}} \qty( \mathbbm{1}_{\mathcal{H}^{(\hat{M})}} + \sum_{\hat{M}' \subseteq \hat{M}} \sum_{i^{(\hat{M}')} = 1}^{d^{(m)2} - 1} r_{i^{(\hat{M}')}} \lambda_{i^{(\hat{M}')}}) \, ,
\end{equation}
where $d^{(\hat{M})} \equiv \dim \mathcal{H}^{(\hat{M})} = \prod_{m \in \hat{M}'} d^{(m)}$, and $r_{0, \ldots, 0} = 1$. Let us define the vectors $\vec{r}^{(\hat{M})} := \qty(r_{i^{(\hat{M})}} | i^{(m)} \neq 0, \forall m \in \hat{M}) \in \mathbb{R}^{\prod_{m \in \hat{M}}d^{(m)^2 - 1}} \cong \bigotimes_{m \in \hat{M}} \mathbb{R}^{d^{(m)2} - 1}$, $\forall \hat{M} \subseteq \hat{N}$. Then, the description of $\varrho \in \mathcal{D(H)}$ is determined by those vectors: $\vec{r} = \bigoplus_{\hat{M} \subseteq \hat{N}} \vec{r}^{(\hat{M})} \in \mathbb{R}^{d^2 - 1}$. Alternatively, $\varrho^{(\hat{M})} \in \mathcal{D\qty[H^{(\hat{M})}]}$ is defined by the subset of vectors $\bigoplus_{\hat{M'} \subseteq \hat{M}} \vec{r}^{(\hat{M'})}$. Note that the direct sums has been taken in order of increasing cardinality. From now on, all results will be derived for reduced states, recovering global as a particular case setting $\hat{M} = \hat{N}$.

Performing the random measurements protocol, ignoring measurements on subsystems $\hat{N} \setminus \hat{M}$ (setting $A^{(n)} = \mathbbm{1}_{\mathcal{H}^{(n)}}$, $\forall n \notin \hat{M}$) one obtains the correlation distributions of subsystems $\hat{M}$. Repeating the process for every possible partition of the system and applying Eq.~\eqref{Equation: t-th order moments}, one retrieves the so-called marginal moments. Thus, the marginal 2nd-order moment of $\varrho$ over partition $\hat{M}$ is equal to the global 2nd-order moment of the reduced state $\varrho^{(\hat{M})}$, i.e.
$\mathcal{R} \qty[\varrho^{(\hat{M})}] \equiv \sum_{i^{(\hat{M})} = 1}^{d^{(m)2} - 1} r_{i^{(\hat{M})}}^2 = \norm{\vec{r}^{(\hat{M})}}^2$. Note that we have removed the superscript $(2)$ in $\mathcal{R}^{(2)}$, since 2nd-order moments will be used uniquely from now on. In addition, we remark that the 2nd-order marginal moments are related to the purity,
\begin{equation} \label{Equation: Relation marginal purities and moments}
    \mathcal{P}\qty[\varrho^{(\hat{M})}] \equiv \tr (\varrho^{(\hat{M})})^2 = \frac{1}{d^{(M)}} \qty( 1 + \sum_{\hat{M}' \subseteq \hat{M}} \norm{\vec{r}^{(\hat{M}')}}^2) \, .
\end{equation}
We refer to \ref{Proof: Relation marginal purities and moments} for details, while we note that $\mathcal{P}\qty[\varrho^{(\hat{M})}] \in \qty[\frac{1}{d^{(\hat{M})}}, 1]$, and the marginal 2nd-order moments take values in the range $\mathcal{R}\qty[\varrho^{(\hat{M})}] \in [0, d^{(\hat{M})}-1]$, $\forall \hat{M} \subseteq \hat{N}$.


Let the sets of all marginal 2nd-order moments and marginal purities of $\varrho$ with respect to all possible partitions $\hat{M}' \subseteq \hat{M}$ be respectively denoted by $\mathcal{R} \qty{\varrho^{(\hat{M})}} \equiv \qty{ \mathcal{R} \qty[\varrho^{(\hat{M}')}] \ \Big\vert \ \forall \hat{M}' \subseteq \hat{M}}$ and $\mathcal{P} \qty{\varrho^{(\hat{M})}} \equiv \qty{ \mathcal{P} \qty[\varrho^{(\hat{M}')}] \ \Big\vert \ \forall \hat{M}' \subseteq \hat{M}}$. The number of marginal moments (and purities) is $2^N$, although only $2^N - 1$ are non-trivial, since $r_{(0, \ldots, 0)} = 1$, $\forall \varrho \in \mathcal{D(H)}$. Indeed, $\mathcal{R} \qty{\varrho^{(\hat{M})}}$ and $\mathcal{P} \qty{\varrho^{(\hat{M})}}$ provide essentially the same information: $\mathcal{P} \qty{\varrho^{(\hat{M}')}} =\mathcal{P} \qty{\tilde{\varrho}^{(\hat{M}')}} \iff \mathcal{R} \qty{\varrho^{(\hat{M}')}} = \mathcal{R} \qty{\tilde{\varrho}^{(\hat{M}')}}$, where the equality holds among equal labeled elements (cf.~\ref{Proof: Relation set of purities and moments}).

{\em Marginal moments and Bloch rotations.--} Let $\mathcal{SO}(d^2 - 1)$ be the group of special orthogonal operators acting in $\mathbb{R}^{d^2 - 1}$. Then, the relation between states with equal purity (equal global moments) is as follows. If $\tilde{\varrho} = U \varrho U^\dagger$ with $U \in \mathcal{SU}(d)$, then $\exists Q \in \mathcal{SO}(d^2 - 1)$ such that $\vec{\tilde{r}} = Q \vec{r}$. The proof is in~\ref{Proof: Unitaries and rotations}, which can be generalized to arbitrary reduced states $\varrho^{(\hat{M})}$. This result suggests that there exist states $\varrho, \tilde{\varrho} \in \mathcal{D(H)}$ such that $\mathcal{P}(\varrho) = \mathcal{P}(\tilde{\varrho})$ or equivalently $\norm{\vec{\tilde{r}}} = \norm{\vec{r}}$ (i.e. $\vec{\tilde{r}} = Q \vec{r}$ with $Q \in \mathcal{SO}(d^2 - 1)$, and yet $\tilde{\varrho} \neq U \varrho U^\dagger$. The proof of this statement is in~\ref{proof: Same purity different moments}. As shown in~\ref{Proof: Same moment different classes}, having the same global moments does not imply being in the same local unitary class. Furthermore, we also observe that having the same purity, which is a function of the state which gathers information about the marginal moments of every partition, does not imply being in the same local unitary class.


In the Bloch vector representation, every state of dimension $d^{(\hat{M})}$ it is assigned to a unique vector $\mathbb{R}^{d^{(\hat{M})2} - 1}$, or $2^N - 1$ vectors $\vec{r}^{(\hat{M})}$ in the tensor product Hilbert space. However, the contrary is not true. The reason, although the Gell-Mann representation ensures $\tr \varrho = 1$ and $\varrho^\dagger = \varrho$, the positivity of the state is not ensured. For every completely positive trace preserving (CPTP) map that preserves the norm of the Bloch vector, there exists $Q \in \mathcal{SO}(d^2 - 1)$ such that $\vec{\tilde{r}} = Q \vec{r}$. The proof is in~\ref{Proof: Rotations and CPTP maps 2}.

{\em Marginal moments and entanglement.--} Bloch rotations are a useful tool to investigate the relation between LU-equivalent states and their set of marginal moments:
\begin{theorem} \label{Theorem: LU and LO}
    If $\tilde{\varrho}^{(\hat{M})} \sim \varrho^{(\hat{M})}$; then $\forall m \in \hat{M}$, $\exists Q^{(m)} \in \mathcal{SO}\qty[d^{(m)2} - 1]$ such that $\vec{\tilde{r}}^{(\hat{M}')} = \qty[\bigotimes_{m \in \hat{M}'} Q^{(m)}] \vec{r}^{(\hat{M}')}$, $\forall \hat{M}' \subseteq \hat{M}$.
\end{theorem}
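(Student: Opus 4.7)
The plan is to reduce the statement to the single-subsystem Bloch-rotation result already established in~\ref{Proof: Unitaries and rotations} and then to lift it to every marginal of $\hat{M}$ through the tensor-product structure of the Gell-Mann basis, $\lambda_{i^{(\hat{M}')}} = \bigotimes_{m \in \hat{M}'} \lambda_{i^{(m)}}$.

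First, I would unpack the hypothesis $\tilde{\varrho}^{(\hat{M})} \sim \varrho^{(\hat{M})}$ as $\tilde{\varrho}^{(\hat{M})} = U \varrho^{(\hat{M})} U^\dagger$ with $U = \bigotimes_{m \in \hat{M}} U^{(m)}$ and $U^{(m)} \in \mathcal{SU}[d^{(m)}]$. For any $\hat{M}' \subseteq \hat{M}$, the partial trace over $\hat{M} \setminus \hat{M}'$ commutes with local unitaries acting exclusively on $\hat{M}'$, so $\tilde{\varrho}^{(\hat{M}')} = U^{(\hat{M}')} \varrho^{(\hat{M}')} U^{(\hat{M}')\dagger}$ with $U^{(\hat{M}')} = \bigotimes_{m \in \hat{M}'} U^{(m)}$. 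This reduces the theorem to tracking how each marginal Bloch vector transforms under this smaller LU and, crucially, shows that the same collection $\{U^{(m)}\}_{m \in \hat{M}}$ serves every subset $\hat{M}'$ simultaneously.

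Second, I would invoke the single-subsystem lemma: for every $m \in \hat{M}$ there is $Q^{(m)} \in \mathcal{SO}\qty[d^{(m)2} - 1]$ such that $U^{(m)\dagger} \lambda_{i^{(m)}} U^{(m)} = \sum_{j^{(m)}} \qty[Q^{(m)}]_{i^{(m)} j^{(m)}} \lambda_{j^{(m)}}$. The decisive point is that $Q^{(m)}$ depends only on $U^{(m)}$ and is therefore unchanged across all subsets containing $m$. Tensoring these site-wise identities, and using that the Gell-Mann basis factorises on product indices, yields
\begin{equation*}
U^{(\hat{M}')\dagger} \lambda_{i^{(\hat{M}')}} U^{(\hat{M}')} = \sum_{j^{(\hat{M}')}} \qty[\bigotimes_{m \in \hat{M}'} Q^{(m)}]_{i^{(\hat{M}')} j^{(\hat{M}')}} \lambda_{j^{(\hat{M}')}}.
\end{equation*}
Taking the Hilbert-Schmidt inner product of both sides with $\varrho^{(\hat{M}')}$, and recalling $r_{i^{(\hat{M}')}} = \expval{\varrho^{(\hat{M}')}, \lambda_{i^{(\hat{M}')}}}$, produces componentwise $\vec{\tilde{r}}^{(\hat{M}')} = \qty[\bigotimes_{m \in \hat{M}'} Q^{(m)}] \vec{r}^{(\hat{M}')}$ for every $\hat{M}' \subseteq \hat{M}$, which is the claimed identity.

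The main obstacle is localised entirely in the single-subsystem lemma already quoted from~\ref{Proof: Unitaries and rotations}: orthogonality of $Q^{(m)}$ follows from invariance of $\langle \lambda_i, \lambda_j \rangle = d^{(m)} \delta_{ij}$ under unitary conjugation, and $\det Q^{(m)} = +1$ from the path-connectedness of $\mathcal{SU}[d^{(m)}]$ together with the continuity of the map $U^{(m)} \mapsto Q^{(m)}$, which sends the identity to the identity. Once that lemma is in hand, the multi-subsystem extension is pure bookkeeping; the only non-trivial conceptual ingredient is that the same local $Q^{(m)}$ acts, unchanged, in every tensor factor of every marginal whose support contains site $m$.
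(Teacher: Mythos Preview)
Your proposal is correct and follows essentially the same route as the paper: both arguments rest on the single-subsystem homomorphism $\mathcal{SU}[d^{(m)}]\to\mathcal{SO}[d^{(m)2}-1]$ from~\ref{Proof: Unitaries and rotations} and then tensor it across sites. The only presentational difference is that you first pass to each marginal via the partial trace and read off coefficients by Hilbert--Schmidt duality, whereas the paper works directly in the Gell--Mann expansion of the full state $\tilde{\varrho}^{(\hat{M})}$, where the $\hat{M}'$-blocks are already separated and the identity factors on sites $m\notin\hat{M}'$ are handled by noting the unitary acts trivially there; the underlying computation is the same.
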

\begin{proof}
    See End Matter (cf. \ref{Proof theorem: LU and LO}).
\end{proof}

This means that, given two LU-equivalent states, their Bloch vectors among any partition $\hat{M}$ are related by a tensor rotation, whose tensor terms act on the individual subsystems. Moreover, we also have
\begin{theorem} \label{Theorem: Equivalent states have the same moments}
     $\varrho^{(\hat{M})} \sim \tilde{\varrho}^{(\hat{M})} \implies \mathcal{R} \qty{\varrho^{(\hat{M})}} = \mathcal{R} \qty{\tilde{\varrho}^{(\hat{M})}}$.
\end{theorem}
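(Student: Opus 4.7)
The plan is to deduce Theorem~\ref{Theorem: Equivalent states have the same moments} as an almost immediate corollary of Theorem~\ref{Theorem: LU and LO}, using only the fact that the marginal 2nd-order moment of $\varrho^{(\hat{M}')}$ is the squared Euclidean norm of the Bloch vector $\vec r^{(\hat{M}')}$, together with the fact that tensor products of orthogonal matrices remain orthogonal. So the whole argument is essentially a norm-preservation observation, with all the nontrivial structure already packaged into Theorem~\ref{Theorem: LU and LO}.

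Concretely, assume $\varrho^{(\hat{M})} \sim \tilde{\varrho}^{(\hat{M})}$. By Theorem~\ref{Theorem: LU and LO} there exist $Q^{(m)} \in \mathcal{SO}[d^{(m)2}-1]$ such that for every subset $\hat{M}' \subseteq \hat{M}$ one has $\vec{\tilde r}^{(\hat{M}')} = \bigl[\bigotimes_{m\in\hat{M}'} Q^{(m)}\bigr]\vec r^{(\hat{M}')}$. The key observation is that $\bigotimes_{m\in\hat{M}'} Q^{(m)}$ lies in $\mathcal{SO}\bigl[\prod_{m\in\hat{M}'}(d^{(m)2}-1)\bigr]$, since the Kronecker product of orthogonal matrices is orthogonal (its transpose equals its inverse componentwise in the tensor factorization). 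Consequently it preserves the Euclidean norm:
\begin{equation}
    \bigl\|\vec{\tilde r}^{(\hat{M}')}\bigr\|^2 = \Bigl\|\bigl[\textstyle\bigotimes_{m\in\hat{M}'} Q^{(m)}\bigr]\vec r^{(\hat{M}')}\Bigr\|^2 = \bigl\|\vec r^{(\hat{M}')}\bigr\|^2.
\end{equation}
By the identification $\mathcal{R}\bigl[\varrho^{(\hat{M}')}\bigr] = \|\vec r^{(\hat{M}')}\|^2$ established around Eq.~\eqref{Equation: 2nd-order moment}, this equality reads $\mathcal{R}\bigl[\tilde{\varrho}^{(\hat{M}')}\bigr] = \mathcal{R}\bigl[\varrho^{(\hat{M}')}\bigr]$.

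Since this holds for every $\hat{M}' \subseteq \hat{M}$, and since the sets $\mathcal{R}\{\varrho^{(\hat{M})}\}$ and $\mathcal{R}\{\tilde{\varrho}^{(\hat{M})}\}$ are both indexed by the $2^{|\hat{M}|}$ subsets $\hat{M}' \subseteq \hat{M}$, the equality is in fact label-by-label (not merely set-theoretic), which is the statement of the theorem. No step here is delicate: the only place where work has been done is in Theorem~\ref{Theorem: LU and LO}, and the remaining manipulation is the elementary fact that orthogonal linear maps preserve norms. Thus the main obstacle, if any, is entirely absorbed into the earlier theorem, and the proof of Theorem~\ref{Theorem: Equivalent states have the same moments} reduces to one line of norm preservation under a tensor-factorized orthogonal map.
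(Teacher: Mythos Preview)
Your proof is correct and essentially identical to the paper's \textit{Proof 1} in \ref{proof theorem: Equivalent states have the same moments}: both invoke Theorem~\ref{Theorem: LU and LO} to obtain $\vec{\tilde r}^{(\hat{M}')} = \bigl[\bigotimes_{m\in\hat{M}'} Q^{(m)}\bigr]\vec r^{(\hat{M}')}$ and then observe that this orthogonal (tensor) transformation preserves the norm, which is exactly the marginal moment. The paper also records a second route via marginal purities (\textit{Proof 2}), but your argument matches the first one line-for-line.
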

\begin{proof}
    It is a direct consequence of the LU-invariance of $t$-th order moments. Two alternative proofs  are given in \ref{proof theorem: Equivalent states have the same moments}.
\end{proof}

%
%

Theorem~\ref{Theorem: Equivalent states have the same moments} indicates that $\mathcal{R} \qty{\varrho}$ may be useful in detecting entanglement, yet it is not enough to prove full distinguishability. For that, the other direction of implication in Th.~\ref{Theorem: Equivalent states have the same moments} is needed. We derive the conditions that the Bloch vectors must fulfill when two states display equal marginal moments:
\begin{theorem} \label{Theorem: Set of moments and rotations}
    If $\mathcal{R} \qty[\varrho^{(\hat{M})}] = \mathcal{R} \qty[\tilde{\varrho}^{(\hat{M})}]$, then $\exists Q^{(\hat{M}')} \in \mathcal{SO}\qty[\prod_{m \in \hat{M}'}d^{(m)^2 - 1}]$ such that $\vec{\tilde{r}}^{(\hat{M}')} = Q^{(\hat{M}')} \vec{r}^{(\hat{M}')}$.
\end{theorem}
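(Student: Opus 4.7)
The plan is to reduce the statement to an elementary fact from Euclidean geometry: any two vectors of equal norm in $\mathbb{R}^n$ with $n\geq 2$ are connected by an element of $\mathcal{SO}(n)$. The bridge from the moment-equality hypothesis to this linear-algebraic fact is Eq.~\eqref{Equation: 2nd-order moment} applied to the reduced state, which yields $\mathcal{R}\qty[\varrho^{(\hat{M}')}]=\norm{\vec{r}^{(\hat{M}')}}^2$, so equality of the marginal moment translates directly into $\norm{\vec{r}^{(\hat{M}')}}=\norm{\vec{\tilde{r}}^{(\hat{M}')}}$.

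First I would fix the partition of interest and view both Bloch vectors as elements of $\mathbb{R}^n$ with $n=\prod_{m\in\hat{M}'}(d^{(m)2}-1)$. Since every local dimension satisfies $d^{(m)}\geq 2$, we have $d^{(m)2}-1\geq 3$, and in particular $n\geq 3$. This dimensional bound is essential, since it leaves room in the orthogonal complement of any chosen unit vector to correct the determinant to $+1$ without disturbing that vector.

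Second, I would construct $Q^{(\hat{M}')}$ explicitly. The zero-vector case $\vec{r}^{(\hat{M}')}=\vec{0}$ is trivial, as it forces $\vec{\tilde{r}}^{(\hat{M}')}=\vec{0}$ by the norm equality and allows $Q^{(\hat{M}')}=\mathbbm{1}$. Otherwise, normalize to $\hat{u}=\vec{r}^{(\hat{M}')}/\norm{\vec{r}^{(\hat{M}')}}$ and $\hat{v}=\vec{\tilde{r}}^{(\hat{M}')}/\norm{\vec{\tilde{r}}^{(\hat{M}')}}$, complete each to an orthonormal basis of $\mathbb{R}^n$ via Gram-Schmidt, and take the linear map identifying the two bases. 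This map is orthogonal and sends $\vec{r}^{(\hat{M}')}$ to $\vec{\tilde{r}}^{(\hat{M}')}$. If its determinant is $+1$ we are done; if it is $-1$, compose with the reflection that flips a single basis vector lying in the orthogonal complement of $\hat{u}$. Such a reflection exists precisely because $n\geq 2$, it leaves $\vec{r}^{(\hat{M}')}$ invariant, and it flips the sign of the determinant, producing the required element of $\mathcal{SO}(n)$.

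There is no deep obstacle in this argument; it is essentially the transitivity of $\mathcal{SO}(n)$ on spheres of fixed radius. The only subtleties that warrant explicit treatment are the zero-vector edge case and the dimensional inequality that permits the determinant correction. The more important feature of the construction is what it does \emph{not} provide: the resulting $Q^{(\hat{M}')}$ is a single global rotation of the Bloch vector of the partition and, unlike the rotations obtained in Theorem~\ref{Theorem: LU and LO}, it carries no tensor-product structure across subsystems. This gap between Theorem~\ref{Theorem: Set of moments and rotations} and its LU-equivalent counterpart is precisely what will leave room for the separable-versus-entangled counterexamples constructed later in the paper.
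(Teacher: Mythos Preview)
Your proposal is correct and follows exactly the paper's approach: both reduce the claim to the elementary linear-algebra fact that two vectors of equal Euclidean norm in $\mathbb{R}^n$ are related by an element of $\mathcal{SO}(n)$, invoking $\mathcal{R}\qty[\varrho^{(\hat{M}')}]=\norm{\vec{r}^{(\hat{M}')}}^2$ as the bridge. The paper states this fact in one line without construction, whereas you spell out the Gram--Schmidt plus determinant-correction argument and handle the zero-vector and $n\geq 2$ edge cases explicitly; your closing remark on the absence of tensor-product structure is also apt and matches the paper's subsequent use of the theorem.
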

\begin{proof}
    See End Matter (cf. \ref{Proof theorem: Set of moments and rotations}).
\end{proof}


Joining all our results up to now, we can construct the following chain of statements:
\begin{align}
    \mathcal{R} \{\varrho\} & = \mathcal{R} \{\tilde{\varrho}\}\label{S1}\\
    & \Updownarrow\nonumber\\
    \vec{\tilde{r}}^{(\hat{M})} = Q^{(\hat{M})} & \vec{r}^{(\hat{M})}, \ \forall \hat{M} \subseteq \hat{N}\label{S2}\\
    & \Uparrow\nonumber\\
    \vec{\tilde{r}}^{(\hat{M})} = \bigg[\bigotimes_{m \in \hat{M}} Q^{(m)}\bigg] \vec{r}^{(\hat{M})} &, \ \forall \hat{M} \subseteq \hat{N}, m \in \hat{M}\label{S3}\\
    & \Uparrow\nonumber\\
    \varrho &\sim \tilde{\varrho}\label{S4}
\end{align}
Note that~\eqref{S2}$\notimplies$\eqref{S3} since $\bigotimes_{m \in \hat{M}} \mathcal{SO}\qty[d^{(m)2} - 1] \subset \mathcal{SO} \qty[\prod_{m \in \hat{M}} d^{(m)2} - 1]$, while~\eqref{S3}$\notimplies$\eqref{S4} because of Th.~\ref{Theorem: LU and LO} which is a one way implication, since $\mathcal{SU}(d)$ and $\mathcal{SO}(d^2 - 1)$ are not isomorphic for a general $d$. Thus, for $N$ particles of dimension $d^{(n)}$, there exist states $\varrho, \tilde{\varrho} \in \mathcal{D(H)}$ such that $\mathcal{R} \qty{\varrho} = \mathcal{R} \qty{\tilde{\varrho}}$ and $\varrho \nsim \tilde{\varrho}$. As we will show in the following, this result holds even for qubits, which suffices to prove the general case. 

Therefore, from now on we focus on qubits, i.e. $d^{(n)} = 2$, $\forall n \in \hat{N}$. The two-to-one surjective homomorphism between $\mathcal{SU}(2)$ and $\mathcal{SO}(3)$ allows for~\eqref{S3}$\iff$\eqref{S4}:


%
%

\begin{theorem} \label{Theorem: Unitary equivalence reduced states}
    $\mathcal{R} \qty[\varrho^{(m)}] = \mathcal{R} \qty[\tilde{\varrho}^{(m)}] \iff \exists U^{(m)} \in \mathcal{SU}(2)$ such that $\tilde{\varrho}^{(m)} = U^{(m)} \varrho^{(m)} U^{(m)\dagger}$. 
\end{theorem}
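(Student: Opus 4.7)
The plan is to prove the two implications separately; the forward direction is a one-line specialisation of an earlier result, while the converse rests on the qubit-specific fact that every Bloch rotation lifts to a genuine unitary.

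For $(\Leftarrow)$, if $\tilde{\varrho}^{(m)} = U^{(m)} \varrho^{(m)} U^{(m)\dagger}$ with $U^{(m)} \in \mathcal{SU}(2)$, then Theorem~\ref{Theorem: Equivalent states have the same moments}, restricted to the single-subsystem reduction $\hat{M} = \{m\}$, yields $\mathcal{R}[\varrho^{(m)}] = \mathcal{R}[\tilde{\varrho}^{(m)}]$ at once. No further work is needed here.

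For $(\Rightarrow)$, I would work entirely in the Bloch picture. Writing $\varrho^{(m)} = \tfrac{1}{2}\bigl(\mathbbm{1} + \vec{r}^{(m)} \cdot \vec{\sigma}\bigr)$, and similarly for $\tilde{\varrho}^{(m)}$, Eq.~\eqref{Equation: 2nd-order moment} turns the hypothesis into $\|\vec{r}^{(m)}\| = \|\tilde{\vec{r}}^{(m)}\|$. Since $\mathcal{SO}(3)$ acts transitively on every sphere in $\mathbb{R}^3$, there exists $Q \in \mathcal{SO}(3)$ with $\tilde{\vec{r}}^{(m)} = Q\, \vec{r}^{(m)}$. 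I then invoke the two-to-one surjective homomorphism $\mathcal{SU}(2) \twoheadrightarrow \mathcal{SO}(3)$ highlighted in the paragraph preceding the theorem to lift $Q$ to a $U^{(m)} \in \mathcal{SU}(2)$ whose adjoint action on the Pauli matrices is $U^{(m)} \sigma_i U^{(m)\dagger} = \sum_j Q_{ji}\, \sigma_j$. Substituting this identity into the Bloch form of $\varrho^{(m)}$ and using linearity immediately gives $U^{(m)} \varrho^{(m)} U^{(m)\dagger} = \tfrac{1}{2}\bigl(\mathbbm{1} + (Q\vec{r}^{(m)}) \cdot \vec{\sigma}\bigr) = \tilde{\varrho}^{(m)}$.

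There is no substantive obstacle: the whole argument is the qubit specialisation of the implication \eqref{S3}$\Rightarrow$\eqref{S4} that the text has already flagged as failing only because $\mathcal{SU}(d)$ and $\mathcal{SO}(d^2-1)$ cease to be locally isomorphic for $d>2$. The one piece of bookkeeping that would be unavoidable in a self-contained write-up is the adjoint-action formula for the chosen lift of $Q$; this is precisely the statement that $\mathrm{Ad}\colon \mathcal{SU}(2)\to \mathcal{SO}(3)$ is a surjection, which I would cite rather than re-derive.
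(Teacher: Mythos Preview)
Your proposal is correct and follows essentially the same route as the paper: translate the moment equality into $\norm{\vec{r}^{(m)}} = \norm{\vec{\tilde{r}}^{(m)}}$, obtain a rotation in $\mathcal{SO}(3)$, and lift it to $\mathcal{SU}(2)$ via the two-to-one surjective homomorphism. The paper phrases this as a single chain of equivalences rather than two separate implications, and appends an eigendecomposition-based characterisation of the unitaries $U^{(m)}$, but that extra content goes beyond the theorem statement itself.
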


\begin{proof}
    See \ref{Proof theorem: Unitary equivalence reduced states} for the proof and the complete characterization of these unitaries.
\end{proof}

Theorem \ref{Theorem: Unitary equivalence reduced states} allows to prove $\varrho \sim \tilde{\varrho} \iff \mathcal{R} (\varrho) = \mathcal{R} (\tilde{\varrho})$ for two special cases: $N=1$ with general mixed states, and $N = 2$ with pure states. For $N = 1$, unitary and LU equivalence are the same concept, so it follows trivially. For $N = 2$ pure states, the Schmidt decomposition allows to prove that $\mathcal{R}\qty[\varrho^{(1)}] = \mathcal{R}\qty[\varrho^{(2)}]$. Then from Eq.~\eqref{Equation: Relation marginal purities and moments} follows that only one of $\mathcal{R}\qty[\varrho^{(1)}]$, $\mathcal{R}\qty[\varrho^{(2)}]$ or $\mathcal{R}\qty[\varrho^{(1, 2)}]$ allows to completely determine $\mathcal{R}\qty{\varrho}$. Finally by Th.~\ref{Theorem: Unitary equivalence reduced states}, $\varrho$ and $\tilde{\varrho}$ will have the same Schmidt coefficients, and thus in agreement with~\cite{Tran_2015,Tran_2016}.

%
%

\begin{corollary} \label{Corollary: LU(2) and LO(3)}
    If $d^{(m)} = 2$, $\forall m \in \hat{M}'$, then Theorem~\ref{Theorem: LU and LO} is bidirectional, i.e. with an if and only if implication.
\end{corollary}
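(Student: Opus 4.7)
The plan is to use the two-to-one surjective homomorphism $\pi:\mathcal{SU}(2)\to\mathcal{SO}(3)$ to lift the single-qubit rotations $Q^{(m)}$ back to local unitaries, and then check that the resulting global unitary implements the full tensor-rotation structure required by the hypothesis of Theorem~\ref{Theorem: LU and LO}. The forward ($\Rightarrow$) implication is just Theorem~\ref{Theorem: LU and LO}, so only the converse ($\Leftarrow$) needs to be proved.

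Concretely, I would start from the assumption that, for each $m\in\hat{M}'$ with $d^{(m)}=2$, there is a rotation $Q^{(m)}\in\mathcal{SO}(3)$ such that $\vec{\tilde r}^{(\hat{M}'')}=\bigl[\bigotimes_{m\in\hat{M}''} Q^{(m)}\bigr]\vec r^{(\hat{M}'')}$ for every $\hat{M}''\subseteq\hat{M}'$. By surjectivity of $\pi$, pick for each $m$ a lift $U^{(m)}\in\mathcal{SU}(2)$ with $\pi(U^{(m)})=Q^{(m)}$; equivalently, $U^{(m)}\sigma_i U^{(m)\dagger}=\sum_j Q^{(m)}_{ji}\,\sigma_j$ at the level of Pauli matrices. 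Define the local unitary $U=\bigotimes_{m\in\hat{M}'} U^{(m)}$ and compute, for any tuple of indices $i^{(\hat{M}'')}$ with $\hat{M}''\subseteq\hat{M}'$,
\begin{equation}
U\,\lambda_{i^{(\hat{M}'')}}\,U^{\dagger}
=\sum_{j^{(\hat{M}'')}}\Bigl[\bigotimes_{m\in\hat{M}''} Q^{(m)}\Bigr]_{j^{(\hat{M}'')},\,i^{(\hat{M}'')}}\lambda_{j^{(\hat{M}'')}},
\end{equation}
by applying the single-qubit identity factor by factor and using that $U^{(m)}$ acts trivially on the identity component whenever $m\notin\hat{M}''$.

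Expanding $\varrho^{(\hat{M}')}$ in the Gell-Mann (Pauli) basis and conjugating by $U$ then gives
\begin{equation}
U\,\varrho^{(\hat{M}')}\,U^{\dagger}=\frac{1}{d^{(\hat{M}')}}\!\!\left(\mathbbm{1}+\sum_{\hat{M}''\subseteq\hat{M}'}\sum_{j^{(\hat{M}'')}}\!\!\Bigl[\bigl(\!\bigotimes_{m\in\hat{M}''}\! Q^{(m)}\bigr)\vec r^{(\hat{M}'')}\Bigr]_{j^{(\hat{M}'')}}\!\lambda_{j^{(\hat{M}'')}}\right)\!.
\end{equation}
By hypothesis, the coefficient of each $\lambda_{j^{(\hat{M}'')}}$ coincides with the corresponding entry of $\vec{\tilde r}^{(\hat{M}'')}$, so $U\varrho^{(\hat{M}')}U^{\dagger}=\tilde\varrho^{(\hat{M}')}$. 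Since $U$ is a tensor product of single-site elements of $\mathcal{SU}(2)$, this is precisely a local-unitary equivalence, establishing $\varrho^{(\hat{M}')}\sim\tilde\varrho^{(\hat{M}')}$ and hence the reverse implication.

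The only genuinely non-trivial ingredient is the lifting step: it works here because $\pi:\mathcal{SU}(2)\to\mathcal{SO}(3)$ is surjective, and because a single lift $U^{(m)}$ simultaneously implements the \emph{same} rotation $Q^{(m)}$ on every marginal containing the $m$-th site (no consistency problem across different $\hat{M}''$ arises). I expect this to be the main obstacle when trying to generalize the corollary beyond qubits: for $d^{(m)}>2$, the map $\mathcal{SU}(d^{(m)})\to\mathcal{SO}(d^{(m)2}-1)$ is no longer surjective, so an arbitrary choice of local rotations $Q^{(m)}$ generally cannot be realized by a local unitary, which is exactly the gap reflected in the chain of implications~\eqref{S1}--\eqref{S4}.
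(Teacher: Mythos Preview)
Your proof is correct and follows essentially the same route as the paper: the paper's argument for the corollary (given together with the proof of Theorem~\ref{Theorem: LU and LO} in the End Matter) simply notes that, since the homomorphism $\mathcal{SU}(2)\to\mathcal{SO}(3)$ is two-to-one (hence surjective), the step linking $U^{(m)}\lambda_i U^{(m)\dagger}$ to $\sum_j q^{(m)}_{ji}\lambda_j$ can be run in both directions, which is exactly the lifting you perform explicitly. Your write-up is more detailed than the paper's one-line remark, and your observation that a \emph{single} lift $U^{(m)}$ works consistently across all marginals $\hat{M}''$ is a nice clarification.
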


\begin{proof}
    See End Matter (cf. \ref{Proof theorem: LU and LO}).
\end{proof}


Even in the case of qubits, the relation~\eqref{S2}$\notimplies$\eqref{S3} still holds, so there exist states $\varrho, \tilde{\varrho} \in \mathcal{D(H)}$ such that $\mathcal{R} \qty{\varrho} = \mathcal{R} \qty{\tilde{\varrho}}$, and yet $\varrho \nsim \tilde{\varrho}$. To prove this, we develop a method for the generation of counterexamples: Evolve an initial state $\varrho$ with a rotation $Q = \bigoplus_{\hat{M} \subseteq \hat{N}} Q^{(\hat{M})}$, subjected to $Q \neq \bigotimes_{m \in \hat{M}} Q^{(m)}$, such that $\tilde{\varrho} \geq 0$. It is important to ensure that $2^{-N} < \mathcal{P}[\varrho] < 1$. If $\mathcal{P}[\varrho] = 2^{-N}$ the problem is trivial since $\norm{\vec{r}^{(\hat{M})}}^2 = 0$, $\forall \hat{M} \subseteq \hat{N}$; while for $\mathcal{P}[\varrho]=1$ one falls back to the pure state case.


Let us consider a particular counterexample using the following separable two-qubit state, $\varrho_{ce} = \frac{3}{4} \dyad{00} + \frac{1}{4} \dyad{11}$ so that $\vec{r}^{(1)}_{ce} = \vec{r}^{(2)}_{ce} = \hat{z} / 2$ and $\vec{r}^{(1,2)}_{ce} = \hat{z} \otimes \hat{z}$, with $\hat{x},\hat{y},\hat{z}$ denoting unit vectors of $\mathbb{R}^3$. Then, we perform the rotation $Q = Q^{(1)} \oplus Q^{(2)} \oplus Q^{(1, 2)}$ given by $Q^{(1)} = Q^{(2)} = \mathbbm{1}_{\mathbb{R}^3}$, while $Q^{(1,2)}$ transforms the Bloch vector as $\vec{\tilde{r}}^{(1,2)}_{ce} = Q^{(1,2)} \vec{r}^{(1,2)}_{ce} = (0, a, 0, b, 0, 0, 0, 0, c)^T$, with $a^2 + b^2 + c^2 = 1$. The transformed states $\tilde{\varrho}_{ce}$ are positive for $a - b - c \geq -1$, $b - a - c \geq -1$ and $2c(c + 1) \geq 2ab + 1$. To conclude the proof we need to ensure that $Q^{(1, 2)}$ cannot be expressed as $Q^{(1)} \otimes Q^{(2)}$.
\begin{observation} \label{Observation: Product rotation}
    No operator $Q^{(1, 2)}$, that performs the rotation given above, can be implemented as a product $Q^{(1, 2)} = Q^{(1)} \otimes Q^{(2)}$ leaving $\vec{r}^{(1)}_{ce}$ and $\vec{r}^{(2)}_{ce}$ invariant.
\end{observation}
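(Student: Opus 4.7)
The plan is a short argument by contradiction. First I would begin with the elementary observation that a rotation $Q^{(i)} \in \mathcal{SO}(3)$ leaves $\vec{r}^{(i)}_{ce} = \hat{z}/2$ invariant if and only if $Q^{(i)} \hat{z} = \hat{z}$, so that each candidate factor $Q^{(i)}$ must belong to the one-parameter subgroup of rotations about the $z$-axis.

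Next, I would exploit the defining property of the tensor product on rank-one elements combined with the constraint just derived:
\begin{equation}
(Q^{(1)} \otimes Q^{(2)})(\hat{z} \otimes \hat{z}) = (Q^{(1)} \hat{z}) \otimes (Q^{(2)} \hat{z}) = \hat{z} \otimes \hat{z}.
\end{equation}
Hence any product operator $Q^{(1,2)} = Q^{(1)} \otimes Q^{(2)}$ compatible with the invariance requirement acts as the identity on the input vector $\vec{r}^{(1,2)}_{ce}$. Matching this to the prescribed target $(0, a, 0, b, 0, 0, 0, 0, c)^T$ forces $a = b = 0$ and $c = 1$, which is exactly the trivial case $\tilde\varrho_{ce} = \varrho_{ce}$ and is excluded by construction.

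To guarantee that the observation is non-vacuous, I would then check that the three positivity conditions together with $a^2 + b^2 + c^2 = 1$ admit admissible triples away from $(0, 0, 1)$; a quick inspection along the slice $a = b$ (where the third inequality reduces to $3c^2 + 2c - 2 \geq 0$) exhibits a full one-parameter continuum of valid rotations, so the set of non-product $Q^{(1,2)}$'s covered by the observation has positive measure.

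I do not anticipate a substantive obstacle: the tensor-product identity above collapses the problem at once. An alternative route via the matricization $R_{ij} = \tilde r^{(1,2)}_{ij}$ --- whose determinant equals $-abc$, forcing $\operatorname{rank} R = 3$ whenever $abc \neq 0$, in contradiction with the rank-one image that any product rotation would produce from $\hat{z}\hat{z}^T$ --- leads to the same conclusion, but only in the generic regime and is therefore strictly weaker than the tensor-product argument.
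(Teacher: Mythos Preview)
Your proposal is correct and follows essentially the same route as the paper's proof: both reduce the invariance constraint to $Q^{(i)}\hat z=\hat z$, evaluate $(Q^{(1)}\otimes Q^{(2)})(\hat z\otimes\hat z)=\hat z\otimes\hat z$, and compare with the target to force $a=b=0$, $c=1$, which is the excluded trivial case. Your additional remarks on non-vacuousness and the alternative rank-of-matricization argument go beyond what the paper includes, but the core contradiction argument is identical.
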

\begin{proof}
    Establish that $Q^{(n)} \hat{z} = \hat{z}$, $\forall n \in \qty{1, 2}$, and $Q^{(1, 2)} \hat{z} \otimes \hat{z} = a \hat{x} \otimes \hat{y} + b \hat{y} \otimes \hat{x} + c \hat{z} \otimes \hat{z}$, with $a, b, c \neq 0$. Assume that $Q^{(1, 2)} = Q^{(1)} \otimes Q^{(2)}$, so that $Q^{(1, 2)} \hat{z} \otimes \hat{z} = Q^{(1)} \hat{z} \otimes Q^{(2)} \hat{z} = \hat{z} \otimes \hat{z}$. Since the decomposition of a vector into a basis is unique, $Q^{(1, 2)} = Q^{(1)} \otimes Q^{(2)}$ if and only if $a = b = 0$ and $c = 1$. This is a contradiction and thus the assumption is wrong: $Q^{(1, 2)} \neq Q^{(1)} \otimes Q^{(2)}$.
\end{proof}

Note that by construction, $\mathcal{R} \qty{\varrho_{ce}} = \mathcal{R} \qty{\tilde{\varrho}_{ce}}$. However, by Observation \ref{Observation: Product rotation} and Corollary \ref{Corollary: LU(2) and LO(3)} (or Theorem \ref{Theorem: LU and LO} in the general non-qubit case), $\varrho_{ce} \nsim \tilde{\varrho}_{ce}$ for $a,b\neq 0$. Different families of counterexamples arise from different choices of $\varrho_{ce}$, combinations of non-zero coefficients in $\vec{\tilde{r}}^{(1, 2)}_{ce}$, or rotations $Q$. Furthermore, while $\varrho_{ce}$ is separable, $\tilde{\varrho}_{ce}$ may contain different amounts of entanglement depending on the choice of coefficients $a$, $b$ and $c$. This can be seen by computing the well-known entanglement of formation, $E_F(\varrho)$~\cite{Hill_1997, Plenio_2007}. Clearly, for the separable state $E_F(\varrho_{ce}) = 0$. To the contrary, this family of states $\tilde{\varrho}_{ce}$ with $a,b\neq 0$  is entangled, with $0<E_F(\tilde{\varrho}_{ce})\leq E_F^{max}$ where the maximum takes place for $a=b=c=1/\sqrt{3}$ and amounts to $E_F^{max}\approx 0.22$. Therefore, we have proved that there exists states $\varrho, \tilde{\varrho} \in \mathcal{D}(\mathcal{H})$ such that $\mathcal{R} \qty{\varrho} = \mathcal{R} \qty{\tilde{\varrho}}$, yet they are not LU equivalent, $\varrho \nsim \tilde{\varrho}$. Notoriously, the states can be either separable or entangled, demonstrating that the full knowledge of marginal and global 2nd-order moments do not suffice for entanglement detection nor SLOCC classification. However, this does not apply to pure states, since the global 2nd-order moment achieves its lower value if and only if the state is separable~\cite{Tran_2015, Tran_2016}.

{\em Conclusions.--} In this Letter we show that full knowledge of the global and marginal second order moments of a state is insufficient to detect entanglement in a general setting. Indeed, two states may share identical second order moments and yet not be related by local-unitary (LU) transformations. In order to prove this statement we derive the conditions that two LU-equivalents states need to fulfill in terms of their Bloch vectors and how this manifests in their second-order moments. Building on these conditions, we construct a family of states that, even if they share identical global and marginal second-order moments among them, are not LU equivalent, being either separable or entangled with different amount. As a consequence, we also prove that the condition is, furthermore, not enough for the states to be SLOCC equivalent. Therefore, our work contains valuable results in the context of entanglement detection, underscoring the need of more robust criteria beyond second-order measurements, which is of paramount importance in quantum science and technologies.

\begin{acknowledgments}
We acknowledge financial support form the Spanish Government via the project PID2021-126694NA-C22 (MCIU/AEI/FEDER, EU) and project TSI-069100-2023-8 (Perte Chip-NextGenerationEU). E.T.,
R.P. and Y.B. acknowledge the Ram{\'o}n y Cajal (RYC2020-030060-I), (RYC2023-044095-I) and (RYC2023-042699-I) research fellowships. 
\end{acknowledgments}

%

\newpage
\clearpage

\widetext
    
\begin{center}
\textbf{\large End Matter --- Proofs}
\end{center}

\renewcommand{\thesection}{EM\arabic{section}}

\section{States in different LU-classes with the same global moment} \label{Proof: Same moment different classes}

\begin{proof}
    Let $\ket{\phi^+_d} = \frac{1}{\sqrt{d}} \sum_{i = 0}^{d - 1} \ket{i}^{\otimes N}$, then $\mathcal{R} \qty(\ketbra{\phi^+_2}{\phi^+_2} \otimes \ketbra{\phi^+_2}{\phi^+_2}) = \mathcal{R} \qty(\ketbra{\phi^+_4}{\phi^+_4}) = 9$. These two states are neither LU-equivalent nor SLOCC-equivalent. Another example would be pure tripartite biseparable SLOCC classes, where states have the same amount of entanglement, between different parties. For example, $\ketbra{\phi^+_2}{\phi^+_2} \otimes \ketbra{0}{0}$, and $\ketbra{0}{0} \otimes \ketbra{\phi^+_2}{\phi^+_2}$.
\end{proof}

\section{Relation between marginal purities and moments} \label{Proof: Relation marginal purities and moments}

\begin{proof}
    The purity can be written as a function of $\vec{r}$: $\tr[\varrho^{(\hat{M}')2}] = \frac{1}{d^2} \sum_{i^{(\hat{M}')}, j^{(\hat{M}')} = 0}^{d^{(m)2} - 1} r_{i^{(\hat{M}')}} r_{j^{(\hat{M}')}} \tr(\lambda_{i^{(\hat{M}')}} \lambda_{j^{(\hat{M}')}}) = \frac{1}{d} \sum_{i^{(\hat{M}')}, j^{(\hat{M}')} = 0}^{d^{(m)2} - 1} r_{i^{(\hat{M}')}} r_{j^{(\hat{M}')}} \qty[\prod_{m \in \hat{M}'} \delta_{i^{(m)}, j^{(m)}}] = \frac{1}{d} \sum_{i^{(\hat{M}')} = 0}^{d^{(m)2} - 1} r_{i^{(\hat{M}')}}^2$. Other expressions can be obtained by expanding and rearranging the terms: $\sum_{i^{(\hat{M}')} = 0}^{d^{(m)2} - 1} r_{i^{(\hat{M}')}}^2 = 1 + \sum_{\hat{M} \subseteq \hat{M}'} \sum_{i^{(\hat{M})} = 1}^{d^{(m)2} - 1} r_{i^{(\hat{M})}}^2$. Notice that $\norm{\bigoplus_{\hat{M} \subseteq \hat{M}'} \vec{r}^{(\hat{M})}} = \sum_{\hat{M} \subseteq \hat{M}'} \norm{\vec{r}^{(\hat{M})}}$.
\end{proof}

\section{Relation between set of purities and moments} \label{Proof: Relation set of purities and moments}

\begin{proof}
    Lets go iteratively through the partitions in order of increasing cardinality. In each step, use equation \ref{Equation: Relation marginal purities and moments}.
    
    \textit{Step 1}: $\forall \hat{M}' \subseteq \hat{N}$ such that $\abs{\hat{M}'} = 1$, $\mathcal{P} \qty[\varrho^{(\hat{M}')}] = \mathcal{P} \qty[\tilde{\varrho}^{(\hat{M}')}] \iff 1 +  \mathcal{R} \qty[\varrho^{(\hat{M}')}] = 1 + \mathcal{R} \qty[\tilde{\varrho}^{(\hat{M}')}] \iff \mathcal{R} \qty[\varrho^{(\hat{M}')}] = \mathcal{R} \qty[\tilde{\varrho}^{(\hat{M}')}]$.
    
    \textit{Step 2}: Use the result of step 1. $\forall \hat{M}' \subseteq \hat{N}$ such that $\abs{\hat{M}'} = 2$. $\mathcal{P} \qty[\varrho^{(\hat{M}')}] = \mathcal{P} \qty[\tilde{\varrho}^{(\hat{M}')}] \iff 1 + \sum_{\substack{\hat{M}'' \subseteq \hat{M}' \\ \abs{\hat{M}''} = 1}} \mathcal{R}\qty[\varrho^{(\hat{M}'')}] + \mathcal{R}\qty[\varrho^{(\hat{M}')}] = 1 + \sum_{\substack{\hat{M}'' \subseteq \hat{M}' \\ \abs{\hat{M}''} = 1}} \mathcal{R} \qty[\tilde{\varrho}^{(\hat{M}'')}] + \mathcal{R} \qty[\tilde{\varrho}^{(\hat{M}')}] \iff \mathcal{R} \qty[\varrho^{(\hat{M}')}] = \mathcal{R} \qty[\tilde{\varrho}^{(\hat{M}')}]$. Iterations continue until $\hat{M}' = \hat{M}$.
\end{proof}

\section{Unitaries and rotations} \label{Proof: Unitaries and rotations}

\begin{proof}
    Due to the homomorphism between $\mathcal{SU}(d)$ and $\mathcal{SO}(d^2 - 1)$, every unitary transformation can be expressed by some rotation of $\vec{r}$ by some operator $Q \in \mathcal{SO}(d^2 - 1)$: $U \varrho U^\dagger = \frac{1}{d} \qty(\mathbbm{1} + \sum_{i = 1}^{d^2 - 1} r_i U \lambda_i U^\dagger ) = \frac{1}{d} \qty(\mathbbm{1} + \sum_{i, j = 1}^{d^2 - 1} q_{ji} r_i \lambda_j ) = \frac{1}{d} \qty(\mathbbm{1} + \sum_{j = 1}^{d^2 - 1} \qty[ \sum_{i = 1}^{d^2 - 1} q_{ji} r_i ] \lambda_j ) = \frac{1}{d} \qty(\mathbbm{1} + \sum_{j = 1}^{d^2 - 1} [ Q \vec{r} ]_j \lambda_j )$.
\end{proof}

\section{States with the same purity and different moments} \label{proof: Same purity different moments}

\begin{proof}
    Let $N = 2$, and $d^{(1)} = d^{(2)} = 2$. Choose two states with eigenvalue fulfilling equations $\sum_{i = 1}^4 \mu_i = 1$ and $\mathcal{P}(\varrho) = \sum_{i = 1}^4 \mu_i^2$ for some $\mathcal{P}(\varrho)$. For instance, with $\mathcal{P}(\varrho) = \frac{1}{2}$, two examples are: 1. $\mu_1 = \mu_2 = \frac{1}{2}$, $\mu_3 = \mu_4 = 0$; and 2. $\mu_1 = \mu_2 = \frac{1}{6}$, $\mu_3 = \frac{2}{3}$, $\mu_4 = 0$. Then, they are not unitary equivalent. Computing their reduced eigenvalues proves that they are neither LU-equivalent.
\end{proof}

\section{Rotations and CPTP maps} \label{Proof: Rotations and CPTP maps 2}

\begin{proof}
    If the map exists, by assumption both Bloch vectors represent valid quantum states. A well known result of linear algebra states that $\norm{\vec{r}} = \norm{\vec{\tilde{r}}}$ if and only if $\exists Q \in \mathcal{SO}(d^2 - 1)$ such that $\vec{\tilde{r}} = Q \vec{r}$.
\end{proof}

\section{Proof of theorem \ref{Theorem: LU and LO} and corollary \ref{Corollary: LU(2) and LO(3)}} \label{Proof theorem: LU and LO}

\begin{proof}
    Use the homomorphism between $\mathcal{SU}\qty[d^{(n)}]$ and $\mathcal{SO}\qty[d^{(n)2} - 1]$:
    
    $d^{(\hat{M})} \tilde{\varrho}^{(\hat{M})} - \mathbbm{1}_{\mathcal{H}} = \sum_{\hat{M}' \subseteq \hat{M}} \sum_{i^{(\hat{M}')} = 1}^{d^{(m)2} - 1} r_{i^{(\hat{M}')}} \qty[\bigotimes_{m \in \hat{M}'} U^{(m)} \lambda_{i^{(m)}} U^{(m)\dagger}] = \sum_{\hat{M}' \subseteq \hat{M}} \sum_{i^{(\hat{M}')} = 1}^{d^{(m)2} - 1} r_{i^{(\hat{M}')}} \qty[\bigotimes_{m \in \hat{M}'} \sum_{j^{(m)} = 1}^{d^{(m)2} - 1} q^{(m)}_{j^{(m)} i^{(m)}} \lambda_{j^{(m)}}] = \sum_{\hat{M}' \subseteq \hat{M}} \sum_{i^{(\hat{M}')}, j^{(\hat{M}')} = 1}^{d^{(m)2} - 1} r_{i^{(\hat{M}')}} \qty[ \prod_{m \in \hat{M}'} q^{(m)}_{j^{(m)} i^{(m)}}] \qty[\bigotimes_{m \in \hat{M}'} \lambda_{j^{(m)}}] = \sum_{\hat{M}' \subseteq \hat{M}} \sum_{j^{(\hat{M}')} = 1}^{d^{(m)2} - 1} \qty( \sum_{i^{(\hat{M}')} = 1}^{d^{(m)2} - 1} \qty[ \prod_{m \in \hat{M}'} q^{(m)}_{j^{(m)} i^{(m)}}] r_{i^{(\hat{M}')}}) \lambda_{j^{(\hat{M}')}} = \sum_{\hat{M}' \subseteq \hat{M}} \sum_{j^{(\hat{M}')} = 1}^{d^{(m)2} - 1} \qty( \sum_{i^{(\hat{M}')} = 1}^{d^{(m)2} - 1} \qty[ \bigotimes_{m \in \hat{M}'} Q^{(m)}]_{j^{(\hat{M}')} i^{(\hat{M}')}} r_{i^{(\hat{M}')}}) \lambda_{j^{(\hat{M}')}} = \sum_{\hat{M}' \subseteq \hat{M}} \sum_{j^{(\hat{M}')} = 1}^{d^{(m)2} - 1} \qty[ \bigotimes_{m \in \hat{M}'} Q^{(m)} \vec{r}]_{j^{(\hat{M}')}} \lambda_{j^{(\hat{M}')}}$.

    In terms of notation, in the second step one has to take into account that $i^{(\hat{M}')}$ is taken to have $M$ elements, with $0's$ in positions where $m \notin \hat{M}'$. In the third step it has again $M'$ elements because if $i^{(m)} = 0$, $\lambda_{i^{(m)}} = \mathbbm{1}_{\mathcal{H}^{(m)}}$ and the unitary acts trivially.
    
    In the case of qubits, i.e for the proof of corollary \ref{Corollary: LU(2) and LO(3)}, the direction $\impliedby$ is also fulfilled since the homomorphism between $\mathcal{SU}(2)$ and $\mathcal{SO}(3)$ is $2$-to-$1$. And the step from step two to step three can be argued in both directions.
\end{proof}

\section{Proof of theorem \ref{Theorem: Equivalent states have the same moments}} \label{proof theorem: Equivalent states have the same moments}

\begin{proof}
    \textit{Proof 1}: By Theorem~\ref{Theorem: LU and LO}, if $\varrho^{(\hat{M})} \sim \tilde{\varrho}^{(\hat{M})}$ then $\forall \hat{M}' \subseteq \hat{M}$, $\vec{\tilde{r}}^{(\hat{M}')} = \qty[\bigotimes_{m \in \hat{M}'} Q^{(m)}] \vec{r}^{(\hat{M}')}$, with $Q^{(m)} \in \mathcal{SO}\qty[d^{(m)^2} - 1]$, $\forall m \in \hat{M}$. This is equivalent to $\norm{\vec{\tilde{r}}^{(\hat{M}')}} = \norm{\vec{r}^{(\hat{M}')}}$, $\forall \hat{M}' \subseteq \hat{M}$.
    
    \textit{Proof 2}: If $\tilde{\varrho} = U \varrho U^\dagger$, then $\mathcal{P}(\varrho) = \mathcal{P}(\tilde{\varrho})$, and since $\varrho^{(\hat{M})} \sim \tilde{\varrho}^{(\hat{M})}$ implies $\varrho^{\qty(\hat{M}')} \sim \tilde{\varrho}^{\qty(\hat{M}')}, \quad \forall \hat{M}' \subseteq \hat{M}$; if $\varrho^{(\hat{M})} \sim \tilde{\varrho}^{(\hat{M})}$, then $\mathcal{P} \qty{\varrho^{(\hat{M})}} = \mathcal{P} \qty{\tilde{\varrho}^{(\hat{M})}}$. From $\mathcal{P} \qty{\varrho^{(\hat{M}')}} =\mathcal{P} \qty{\tilde{\varrho}^{(\hat{M}')}} \iff \mathcal{R} \qty{\varrho^{(\hat{M}')}} = \mathcal{R} \qty{\tilde{\varrho}^{(\hat{M}')}}$, this is equivalent to $\mathcal{R} \qty{\varrho^{(\hat{M})}} = \mathcal{R} \qty{\tilde{\varrho}^{(\hat{M})}}$.
\end{proof}

\section{Proof of Theorem~\ref{Theorem: Set of moments and rotations}} \label{Proof theorem: Set of moments and rotations}

\begin{proof}
    $\norm{\vec{\tilde{r}}^{(\hat{M}')}} = \norm{\vec{r}^{(\hat{M}')}}$, $\forall M' \subseteq \hat{M}$ if and only if $\exists Q^{(\hat{M}')} \in \mathcal{SO} \qty[d^{(\hat{M}')2} - 1]$, $\forall \hat{M}' \subseteq \hat{M}$ such that $\vec{\tilde{r}}^{(\hat{M}')} = Q^{(\hat{M}')} \vec{r}^{(\hat{M}')}$.
\end{proof}

\section{Proof theorem \ref{Theorem: Unitary equivalence reduced states}} \label{Proof theorem: Unitary equivalence reduced states}

\begin{proof}
    $\norm{\vec{r}^{(m)}} = \norm{\vec{\tilde{r}}^{(m)}}$ if and only if $\vec{\tilde{r}}^{(m)} = Q^{(m)} \vec{r}^{(m)}$, $Q^{(m)} \in \mathcal{SO}(3)$. This is equivalent to $\tilde{\varrho}^{(m)} = U^{(m)} \varrho^{(m)} U^{(m)\dagger}$, $U^{(m)} \in \mathcal{SU}(2)$ ($2$-to-$1$ surjective homomorphisim). Furthermore it is possible to characterize this unitaries. The latter is equivalent to  both states having the same eigenvalues: $\mu \qty{\varrho^{(m)}} = \mu \qty{\tilde{\varrho}^{(m)}}$, up to permutations of the indices. Upon properly reordering, this happens if and only if $\varrho^{(m)} = \sum_{i = 1}^2 \mu^{(m)}_i \ketbra{\psi^{(m)}_i}$ and $\tilde{\varrho}^{(m)} = \sum_{i = 1}^{2} \tilde{\mu}^{(m)}_i \ketbra{\phi^{(m)}_i}$. Finally, $\qty{\ket{\psi^{(m)}_i}}$ and $\qty{\ket{\phi^{(m)}_i}}$ are orthonormal bases of $\mathcal{H}^{(m)}$ if and only if $\exists U^{(m)} \in \mathcal{SU}(2)$ such that $\ket{\phi^{(m)}_i} = U^{(m)} \ket{\psi^{(m)}_i}$, $\forall i \in \qty{1, 2}$.
\end{proof}

\end{document}